 \newtheorem{definition}{Definition}[section]
 \newtheorem{proposition}{Proposition}[section]
\journal{Pyhsica A}
\begin{document}
\begin{frontmatter}

%% Title, authors and addresses

%% use the tnoteref command within \title for footnotes;
%% use the tnotetext command for theassociated footnote;
%% use the fnref command within \author or \address for footnotes;
%% use the fntext command for theassociated footnote;
%% use the corref command within \author for corresponding author footnotes;
%% use the cortext command for theassociated footnote;
%% use the ead command for the email address,
%% and the form \ead[url] for the home page:
%% \title{Title\tnoteref{label1}}
%% \tnotetext[label1]{}
%% \author{Name\corref{cor1}\fnref{label2}}
%% \ead{email address}
%% \ead[url]{home page}
%% \fntext[label2]{}
%% \cortext[cor1]{}
%% \affiliation{organization={},
%%             addressline={},
%%             city={},
%%             postcode={},
%%             state={},
%%             country={}}
%% \fntext[label3]{}

\title{Trend patterns statistics for assessing irreversibility in cryptocurrencies: time-asymmetry versus inefficiency}

% use optional labels to link authors explicitly to addresses:

\author[inst1]{Jessica Morales Herrera}
\affiliation[inst1]{
organization={Instituto de Investigación en Ciencias Básicas y Aplicadas, Universidad Autónoma del Estado de Morelos. },
            addressline={Avenida Universidad 1001}, 
            city={Cuernavaca},
            postcode={62209}, 
            state={Morelos},
            country={Mexico}}

\author[inst2]{R. Salgado-García}
\ead{raulsg@uaem.mx}
\affiliation[inst2]{organization={Centro de Investigación en Ciencias-IICBA, Universidad Autónoma del Estado de Morelos},
            addressline={Avenida Universidad 1001}, 
            city={Cuernavaca},
            postcode={62209}, 
            state={Morelos},
            country={Mexico}}

\begin{abstract}
%% Text of abstract

In this paper, we present a measure of time irreversibility using trend pattern statistics. We define the irreversibility index as the Kullback-Leibler divergence between the distribution of uptrends subsequences (increasing trends) and the corresponding downtrends subsequences distribution (decreasing trends) in a time series. We use this index to analyze the degree of irreversibility in log return series over time, specifically focusing on five cryptocurrencies: Bitcoin, Ethereum, Ripple, Litecoin, and Bitcoin Cash. Our analysis reveals a strong indication of irreversibility in all these cryptocurrencies and the characteristic evolves over time. We additionally evaluate the market efficiency for these cryptocurrencies based on a recently proposed information-theoretic measure. By comparing inefficiency and irreversibility, we explore the relationship between these statistical features. This comparison provides insight into the non-trivial relationship between inefficiency and irreversibility.
\end{abstract}

%%Graphical abstract
%\begin{graphicalabstract}
%\includegraphics{grabs}
%\end{graphicalabstract}

%%Research highlights
\begin{highlights}
\item Trend patters statistics is used to asses irreversibility of time-series.
\item Time-irreversibility of some cryptocurrency markets is strongly varying in time.
\item Efficiency of cryptocurrencies is compared to time-irreversibility.
\end{highlights}

\begin{keyword}
%% keywords here, in the form: keyword \sep keyword
keyword one \sep keyword two
%% PACS codes here, in the form: \PACS code \sep code
\PACS 0000 \sep 1111
%% MSC codes here, in the form: \MSC code \sep code
%% or \MSC[2008] code \sep code (2000 is the default)
\MSC 0000 \sep 1111
\end{keyword}

\end{frontmatter}

%% \linenumbers

\newpage
%% main text
\section{Introduction}
\label{sec:introduction}

In 2009, Satoshi Nakamoto~\cite{nakamoto2008Bitcoin} launched the \textit{Bitcoin project}, which is a blockchain technology designed to facilitate peer-to-peer transactions and introduce digital currencies, now known as \textit{cryptocurrencies}. Since then, cryptocurrency markets have experienced rapid growth, attracting the attention of researchers, policymakers, investors, and regulators~\cite{corbet2019}. The latter is because cryptocurrencies possess several characteristics that make them appealing to investors. For example, some studies have suggested that Bitcoin has properties that make it a viable investment~\cite{dyhrberg2018investible}. Furthermore, the ability of Bitcoin, as well as other cryptocurrencies, to act as a hedge against stock markets has been analyzed~\cite{dyhrberg_hedging_2016,BOURI2017192}, suggesting that they can act as a safe haven~\cite{dyhrberg2018investible,singh2020}.

The increased interest in cryptocurrencies has prompted numerous studies. One of the primary characteristics examined in current literature is market efficiency. According to Fama~\cite{fama1970efficient}, the weak form of market efficiency asserts that the current asset price fully incorporates all available information. This is a key concept in financial market theory known as the \textit{Efficient Market Hypothesis} (EMH).
Assessing the level of inefficiency in a given market is of utmost importance since market inefficiency is closely related to price change predictability~\cite{eom2007relationship}. Due to its significance, market efficiency has been thoroughly examined by several authors. One of the earliest studies in this area was conducted by Urquhart, who performed various statistical tests and concluded that the Bitcoin market was inefficient but moving towards efficiency over time~\cite{urquhart2016inefficiency}.
Since then, a significant amount of research has been conducted on this topic~\cite{Bariviera2012Where}. To name a few examples, Cheah \textit{et al.} analyzed Bitcoin prices and found evidence of long-term memory, leading them to conclude that the Bitcoin market is inefficient in the weak form~\cite{CHEAH201818}. Bariviera~\cite{BARIVIERA20171} studied the dynamics of long-range dependence properties of Bitcoin prices and concluded that there is a trend towards efficiency in the Bitcoin market, as shown through a Hurst exponent analysis. Sensoy used permutation entropy to analyze the time-varying weak-form efficiency of Bitcoin prices in terms of US dollars (BTC-USD) and euro (BTC-EUR)~\cite{SENSOY201968}. Sensoy showed that both BTC-USD and BTC-EUR markets have become more informationally efficient over time. Other studies~\cite{Kyriazis2019Survey, Lopez-Martin2021Efficiency} support the same conclusion: the Bitcoin and other cryptocurrency markets are evolving over time in a way that leads to at least weak-form efficiency~\cite{kurihara2017market, tiwari2018informational, vidal2018semi, dyhrberg2018investible, vidal2019weak, Jena2022Evidence, ALYAHYAEE2020101168}.

The property of irreversibility, whether it is stochastic or deterministic, provides insight into the underlying dynamics, as noted by Hoover in his work~\cite{hoover2012time}. Physical systems that are out of thermodynamic equilibrium are associated with time-irreversibility. Non-linear processes in dynamical systems theory and non-Gaussian fluctuations are also linked to time-irreversibility, as found in research by Daw~\cite{Daw2000Symbolic}. Therefore, determining the degree of irreversibility of a process from a time series has been the subject of much research recently~\cite{Zanin2021Algorithmic}.
When analyzing financial time series, researchers have found that the concept of time irreversibility can be useful in understanding the dynamics of financial markets. This idea was first explored by Rothman~\cite{rothman1990characterization}, who studied time reversibility in financial time series to determine whether stock prices follow a random walk, which is a basic hypothesis in several economic models~\cite{rothman1994time}. Since then, many studies have shown that time irreversibility is an important concept for studying financial markets, both theoretically and empirically~\cite{Ramsey1996Time,Zumbach2009TimeReversal,irreversibility2016Flanagan}. For example, time irreversibility of financial time series has been used to analyze the dynamic relationship between shocks to volume and volatility~\cite{Fong2003Time}. Time irreversibility has also been applied to log return time series to classify financial markets~\cite{Xia2014Classifying}. As a result, various tests and methods have been developed to analyze the time irreversibility of financial time series~\cite{chen2000testing,irreversibility2016Flanagan,Li2018TimeIrreversibility,Wang2018NewMeasurement,zanin2018assessing}.

It is worth noting that some have suggested that irreversibility could be linked to market inefficiency. The argument is that time series that can't be reversed are less predictable than those that can, and since predictability and inefficiency are closely tied, it is evident that irreversibility could be used as a measure of inefficiency~\cite{irreversibility2016Flanagan}. As a result, it has been suggested to evaluate time series irreversibility to test the efficient market hypothesis~\cite{zanin2018assessing}.
\bigskip

The purpose of this work is twofold. Firstly, we introduce a new method to evaluate time-irreversibility through what we refer to as \emph{trend patterns}. Secondly, we analyze the relationship between irreversibility and inefficiency in some cryptocurrencies. We define a trend pattern as a sequence of log returns that either increase or decrease over time. We then measure the duration of increasing and decreasing patterns in a given time series. We argue that in a reversible time series, there would be no difference between the empirical distributions of the duration of increasing and decreasing patterns. Thus, measuring the statistical differences between such distributions (using the well-known Kullback-Leibler divergence) will give us a measure of the degree of irreversibility of the corresponding process. Additionally, we analyze the market inefficiency using a recently proposed approach based on information-theoretical indicators ~\cite{Brouty2022Statistical,Brouty2022Maxwell}.
Through our parallel analyses, we can determine the correlation between time-irreversibility and inefficiency in the market, allowing us to determine to what extent both indices might be equivalent.

As a result, this work is structured as follows: in Section~\ref{sec:irreversibility_trend}, we introduce the irreversibility index based on trend patterns and demonstrate its relationship to the entropy production rate in a specific Markov chain model. In Section~\ref{sec:estimation_procedure}, we outline the estimation procedure for assessing the trend irreversibility index and evaluate this method in both a reversible stochastic process (a second-order linear autoregressive process) and an irreversible one (a second-order nonlinear autoregressive process). In Section~\ref{sec:methodology}, we provide details about the methodology used for data pre-processing, as well as the statistical test with surrogate data. In this section, we also describe the methodology (introduced in Ref.~\cite{Brouty2022Statistical}) used to estimate inefficiency from log returns. Section~\ref{sec:results} is dedicated to presenting the results obtained from the statistical analyses conducted for each cryptocurrency data set. We conclude this work with a brief discussion of our results and a summary of our main findings in Section~\ref{sec:conclusions}.

%%%%%%%%%%%%%%%%%%%%%%%%%%%%%%%%%%%%%%%%%%%%%%%%%%%% 
%%%%%%%%%%%%%%%%%%%%%%%%%%%%%%%%%%%%%%%%%%%%%%%%%%%% 
\section{Irreversibility and trend pattern analysis}
\label{sec:irreversibility_trend}
%%%%%%%%%%%%%%%%%%%%%%%%%%%%%%%%%%%%%%%%%%%%%%%%%%%% 
%%%%%%%%%%%%%%%%%%%%%%%%%%%%%%%%%%%%%%%%%%%%%%%%%%%% 

%%%%%%%%%%%%%%%%%%%%%%%%%%%%%%%%%%%%%%%%%%%%%%%%%%%% 
\subsection{Irreversibility of a stochastic process}
\label{ssec:irreversibility}
%%%%%%%%%%%%%%%%%%%%%%%%%%%%%%%%%%%%%%%%%%%%%%%%%%%% 

\bigskip

Before we dive into the method for measuring irreversibility through the analysis of trend patterns, let us define some terms and notation that we will be using throughout this work. We will be focusing on discrete-time stochastic processes $\mathcal{X} := \{X_t \, : \, t\in \mathbb{N}_0\}$ with a finite state space $\mathcal{S}$. A finite random path or trajectory of the process starting at time $t$ and ending at time $t+n$ is denoted by $\mathbf{X}_t^{t+n}$, which we can represent as:
\begin{eqnarray}
\mathbf{X}_t^{t+n} := (X_t,X_{t+1},\dots,X_{t+n}),
\end{eqnarray}
this means that $\mathbf{X}_t^{t+n}$ is an $(n+1)$-dimensional random vector.

Bold lowercase letters such as $\mathbf{a}$, $\mathbf{b}$, etc. will denote realizations of the process or finite trajectories. We will use subscripts and superscripts to represent finite realizations $\mathbf{X}_t^{t+n}$ of the process $\mathcal{X}$, i.e.,
\begin{eqnarray}
\mathbf{a}_t^{t+n} := (a_t, a_{t+1}, \dots, a_{t+n}).
\end{eqnarray}

A stochastic process $\{X_t \, : \, t\in \mathbb{N}_0\}$ is said to be reversible if for all $t,n\in \mathbb{N}_0$ we have that
\begin{equation}
\mathbb{P}(\mathbf{X}_t^{t+n} = \mathbf{a}) = \mathbb{P}(\mathbf{X}_t^{t+n} = \mathbf{\overline{a}}),
\label{eq:def:reversibility}
\end{equation}
where $\mathbf{\overline{a}}$ stands for the time-reversed sample trajectory $\mathbf{a}$, i.e.,
\begin{eqnarray}
\mathbf{\overline{a}} :=  (a_{t+n},a_{t+n-1},\dots,a_{t}).
\end{eqnarray}
If the identity (\ref{eq:def:reversibility}) does not hold, the process is said to be \textit{irreversible}. One method to quantify the irreversibility of a given stochastic process is by using the Kullback-Leibler divergence between the distribution of direct and reversed trajectories~\cite{Chazottes2005Testing}:

\begin{eqnarray}
e_{\mathrm{p}} &:=& \lim_{n\to\infty} D_{\mathrm{KL}}\big(\,  \mathbb{P}(\mathbf{X}_0^{n} = \mathbf{a}) \,  || \, \mathbb{P}(\mathbf{X}_0^{n} = \mathbf{\overline{a}})\, \big)
\nonumber
\\
&=&
\lim_{n\to\infty}  \sum_{\mathbf{a}\in \mathcal{S}^{n+1}} \mathbb{P}(\mathbf{X}_0^{n}
= \mathbf{a}) \log\bigg( \frac{\mathbb{P}(\mathbf{X}_0^{n} = \mathbf{a}) }{ \mathbb{P}(\mathbf{X}_0^{n} = \mathbf{\overline{a}}) }\bigg).
\label{eq:def:ep}
\end{eqnarray}

This quantity, $e_{\mathrm{p}}$, is known as the entropy production rate in the context of Markov chains. It is also important to remark that $\log$ stands for the natural logarithm, a notation that we will use throughout all this work.

It is important to note that there are various ways to estimate the entropy production rate. For example, Ref.~\cite{Chazottes2005Testing} studied entropy production estimators based on recurrence time statistics. These estimators have been applied as irreversibility indices for multiple purposes~\cite{Yun2008Estimating, Salgado2021Estimating, Salgado2021Time}. However, the entropy production rate to measure irreversibility from a time series  requires a large sample size to achieve an acceptable estimation error. Therefore, determining the degree of irreversibility has been approached using a variety of methods~\cite{Zanin2021Algorithmic}. In this work, we propose an alternative method for determining irreversibility based on trend pattern statistics.

%%%%%%%%%%%%%%%%%%%%%%%%%%%%%%%%%%%%%%%%%%%%%%%%%%%% 
\subsection{Trend pattern distributions and irreversibility}
\label{ssec:irreversibility}
%%%%%%%%%%%%%%%%%%%%%%%%%%%%%%%%%%%%%%%%%%%%%%%%%%%% 

\bigskip

As mentioned in the introduction, we will assess the degree of irreversibility by analyzing trend patterns in the log return time series. Let $(\ell_{0},\ell_{1},\dots,\ell_{N})$ represent a log return sample data of size $N$. An \textit{uptrend} is defined as a subsequence $(\ell_{t},\ell_{t+1},\dots,\ell_{t+n})$ that occurs in the sample path and meets the following condition:
\begin{equation}
\ell_{t} < \ell_{t+1} < \ell_{t+2} < \cdots < \ell_{t+n},
\end{equation}
where $t\in \mathbb{N}_0$ and $n\in\mathbb{N}$. Similarly, a \textit{downtrend} is defined as a subsequence $(\ell_{0},\ell_{1},\dots,\ell_{N})$ that occurs in the sample path and meets the following condition,
\begin{equation}
\ell_{t} > \ell_{t+1} > \ell_{t+2} > \cdots > \ell_{t+n}.
\end{equation}
For completeness, we refer to those subsequences of a time series that are neither uptrends nor downtrends as \textit{constant trends}.

The length of time $T_{_\uparrow}(t)$ that a trend moves upwards starting at time $t$ is determined by,
\begin{equation}
T_{_\uparrow}(t) = \max\{ n \, : \, \ell_{t} > \ell_{t+1} > \ell_{t+2} > \cdots > \ell_{t+n}  \},
\end{equation}
Similarly, the length of time $T_{_\downarrow}(t)$ that a trend moves downwards starting at time $t$ is determined by,
\begin{equation}
T_{_\downarrow}(t) = \max\{ n \, : \, \ell_{t} < \ell_{t+1} < \ell_{t+2} < \cdots < \ell_{t+n}  \}.
\end{equation}
These random variables will be identified by their appropriate distributions, which are defined as follows:

\begin{definition}
Let $\mathcal{X} := \{X_t \, : \, t\in \mathbb{N}_0\}$ be a stochastic process and let 
$T_{_\uparrow}(0) $ be the duration of an uptrend starting at $t=0$. We 
denote by $P_{_{\uparrow}}(n)$ the probability that $T_{_\uparrow}(0)$ take the value $n\in \mathbb{N}$, i.e.,
\begin{equation}
P_{_{\uparrow}}(n):=\mathbb{P}(T_{_{{\uparrow}}}(0) = n).
\end{equation}

We denote by $P_{_{\downarrow}}(n)$ the probability that $T_{_\downarrow}(0)$ take the value $n\in \mathbb{N}$, i.e.,
\begin{equation}
P_{_{\downarrow}}(n):=\mathbb{P}(T_{_{{\downarrow}}}(0) = n).
\end{equation}
In the following we will refer to $P_{_{\uparrow}}$ and $P_{_{\downarrow}}$ as the \emph{uptrend distribution} and \emph{downtrend distribution} of the process $\mathcal{X}$, respectively.

\end{definition}

It is easy to observe that the uptrend and downtrend distributions coincide when the process $\mathcal{X}$ is a sequence of \textit{independent and identically distributed} (i.i.d.) random variables. This is due to the time symmetry of the process. If the probability of observing a sample path $\mathbf{a}\in \mathcal{S}^n$ equals the probability of observing the reversed path $\overline{\mathbf{a}}$, then the probability of observing an uptrend $\mathbf{u} \in \mathcal{S}^n$ is also equal to the probability of observing a downtrend $\mathbf{v} \in \mathcal{S}^n$. This is because the uptrend reversed in time, $\overline{\mathbf{u}}$, becomes a downtrend due to the change in time direction. Actuallly, this argument is valid for any reversible stochastic process.  Therefore, measuring the a difference between the uptrend and downtrend distributions, $P_{_{\uparrow}}$ and $P_{_{\downarrow}}$, would allows us to measure the degree of irreversibility. The Kullback-Leibler divergence is a well-known method for achieving this comparison.

\begin{definition}

Let $P_{_{\uparrow}}$ and $P_{_{\downarrow}}$ be the  uptrend and downtrend distributions of certain process $\mathcal{X}$. We define the \emph{trend irreversibility index} $I_\mathrm{T}$ as the Kullback-Leibler divergence between the  uptrend and downtrend distributions, i.e.,
\begin{equation}
I_{\mathrm{T}} := D_{\mathrm{KL}}\left( P_{_{\uparrow}} || P_{_{\downarrow}}  \right).
\label{eq:indexIT}
\end{equation}

\end{definition}

We will now establish a result that enables us to relate the trend of irreversibility index to the entropy production rate in a random walk on a lattice.

\begin{proposition}
\label{prop:DKLup-down}
Consider a random walk $\mathcal{X}$ on $\mathbb{Z}$, where $\mathcal{X} := \{X_t \, : \, t\in \mathbb{N}_0\}$. Let $P_{_{\uparrow}}$ and $P_{_{\downarrow}}$ represent the uptrend and downtrend distributions of $\mathcal{X}$, respectively. Assuming the walker moves to the right with probability $p$ in one time step, moves to the left with probability $1-p$, and starts its motion at the origin, we have:
\begin{equation}
D_{\mathrm{KL}}\left( P_{_{\uparrow}} || P_{_{\downarrow}}  \right) = \frac{e_{\mathrm{p}}}{1-p}.
\end{equation}
\end{proposition}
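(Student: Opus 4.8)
The plan is to compute both sides of the claimed identity explicitly for the biased random walk $\mathcal{X}$ and check that they agree. The first step is to determine the uptrend and downtrend distributions. Because $\mathcal{X}$ is a simple random walk, one has $X_0 < X_1 < \cdots < X_n$ if and only if the first $n$ increments are all steps to the right, so $T_{_\uparrow}(0)=n$ is exactly the event that the first $n$ increments equal $+1$ and the $(n+1)$-th equals $-1$, which has probability $p^{\,n}(1-p)$. Conditioning on the origin actually being the start of an uptrend (equivalently, on the first increment being $+1$, an event of probability $p$) turns this into the geometric law $P_{_{\uparrow}}(n)=(1-p)\,p^{\,n-1}$ for $n\ge1$, and the symmetric argument with $p$ and $1-p$ exchanged gives $P_{_{\downarrow}}(n)=p\,(1-p)^{\,n-1}$.

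With these in hand the log-likelihood ratio simplifies to $\log\left(P_{_{\uparrow}}(n)/P_{_{\downarrow}}(n)\right)=(n-2)\log\frac{p}{1-p}$, so, using that $P_{_{\uparrow}}$ is geometric with mean $1/(1-p)$,
\[
D_{\mathrm{KL}}\left(P_{_{\uparrow}} \,||\, P_{_{\downarrow}}\right)
=\left(\log\frac{p}{1-p}\right)\sum_{n\ge1}(n-2)P_{_{\uparrow}}(n)
=\left(\frac{1}{1-p}-2\right)\log\frac{p}{1-p}
=\frac{2p-1}{1-p}\,\log\frac{p}{1-p}.
\]

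It then remains to recognise the factor $(2p-1)\log\frac{p}{1-p}$ as the entropy production rate $e_{\mathrm{p}}$ of the random walk. For that I would use that time-reversing a position trajectory both reverses the order of the increments \emph{and} flips their signs, so that for any admissible path $\mathbf{a}$ with $r$ right-steps and $l$ left-steps one has $\mathbb{P}(\mathbf{X}_0^{n}=\mathbf{a})/\mathbb{P}(\mathbf{X}_0^{n}=\mathbf{\overline{a}})=\left(p/(1-p)\right)^{\,r-l}$ with $r-l=X_n-X_0$; taking logarithms, averaging over the forward dynamics, using $\mathbb{E}[X_n-X_0]=n(2p-1)$ and passing to the per-step rate identifies $e_{\mathrm{p}}=(2p-1)\log\frac{p}{1-p}$. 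Substituting this into the previous display yields $D_{\mathrm{KL}}(P_{_{\uparrow}}||P_{_{\downarrow}})=e_{\mathrm{p}}/(1-p)$, which is the assertion (the case $p=1/2$ being the trivial identity $0=0$, and one assumes $0<p<1$ throughout).

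The computations are elementary; the two places that need care are, first, the bookkeeping in the definitions of $P_{_{\uparrow}}$ and $P_{_{\downarrow}}$ — one must condition on a trend of positive length genuinely beginning at the origin so that these are honest probability distributions, since the raw probabilities $\mathbb{P}(T_{_\uparrow}(0)=n)$ sum to $p$ rather than $1$ over $n\ge1$ — and, second, the identification of $e_{\mathrm{p}}$, where the reversal must be applied to the \emph{position} process (so that it acts on the increments by a sign flip) and not to the increment process itself, which is i.i.d.\ and hence reversible, giving zero. The series summations in between are routine.
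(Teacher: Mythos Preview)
Your proof is correct and follows essentially the same route as the paper: both compute the uptrend and downtrend distributions as geometric laws, evaluate the Kullback--Leibler divergence directly to obtain $\frac{2p-1}{1-p}\log\frac{p}{1-p}$, and then identify the numerator as $e_{\mathrm{p}}$. The only cosmetic differences are that the paper indexes trends from $n=0$ (so $P_{_\uparrow}(n)=p^n(1-p)$ is already normalised without conditioning) and simply cites the formula $e_{\mathrm{p}}=(2p-1)\log\frac{p}{1-p}$ rather than deriving it as you do.
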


\begin{proof}

A random walk on $\mathbb{Z}$ is a Markov chain for which the probability of a path is given by 
\begin{equation}
\mathbb{P}(\mathbf{X}_0^n = \mathbf{a}) = \pi_{0}(a_0) M(a_0,a_1) M(a_1,a_2) \dots M(a_{n-1},a_{n}),
\end{equation}
where $\pi_0 : \mathbb{Z} \to [0,1]\subset \mathbb{R}$ is the initial probability vector (the initial distribution) and $M : \mathbb{Z}^2 \to [0,1]\subset \mathbb{R} $ is the corresponding stochastic matrix given by
\begin{equation}
M(a,b) = p \delta_{a+1,b} + (1-p)\delta_{a-1,b}, \quad \forall a,b\in \mathbb{Z}.
\end{equation}
In the above equation the symbol $\delta_{i,j}$ stands for the Kronecker delta, which is defined as
\[
\delta_{i,j} :=  \left\{ \begin{array} 
            {l@{ \quad \mbox{ if } \quad }l} 
1  & i=j,\\
0 & i\not= j,
             \end{array} \right.
\]

We can observe that a sample trajectory of the random walk only consists of either up or down trends. Furthermore, there is a single uptrend $\mathbf{u}$, which starts at $X_0 = 0$ and lasts for $n$ time units, i.e., $\mathbf{u} = (0,1,2,\dots,n)$. There is also only one downtrend that starts at $X_0 = 0$, lasts for $n$ units, and is given by $\mathbf{v} = (0,-1,-2,\dots,-n)$. Therefore, we can easily compute the corresponding trend distributions. For the uptrend distribution, we have that,
\begin{eqnarray}
 P_{_{\uparrow}}(n)  &=& \mathbb{P}(\mathbf{X}_0^{n} = (0,1,2,\dots,n-1,n); X_{n+1} = n-1)
 \nonumber
 \\
 &=& p^{n} (1-p).
\end{eqnarray}
Analogously,  for the downtrend distribution we obtain,
\begin{eqnarray}
 P_{_{\downarrow}}(n)  &=& \mathbb{P}(\mathbf{X}_0^{n} = (0,-1,-2,\dots,-n+1,-n); X_{n+1} = -n+1)
 \nonumber
 \\
 &=&(1- p)^{n} p.
\end{eqnarray}
These results show that the up and down trend duration for the random walk are distributed according to a geometric distribution. Thus, the Kullback-Leibler divergence  $D_{\mathrm{KL}}\left( P_{_{\uparrow}}(n) || P_{_{\downarrow}}(n)  \right) $ can be exactly computed, 
\begin{eqnarray}
D_{\mathrm{KL}}\left( P_{_{\uparrow}} || P_{_{\downarrow}}  \right) & = &
\sum_{n=0}^\infty P_{_{\uparrow}}(n) \log\bigg(\frac{P_{_{\uparrow}}(n)}{P_{_{\downarrow}}(n)} \bigg) 
\nonumber
\\
& = &
\sum_{n=0}^\infty  p^{n} (1-p)  \log\bigg(\frac{ p^{n} (1-p) }{p (1-p)^n} \bigg) 
\nonumber
\\
& = &
(1-p) \sum_{n=0}^\infty n p^{n}   \log\bigg(\frac{ p  }{ 1-p} \bigg)
+ (1-p) \sum_{n=0}^\infty  p^{n}   \log\bigg(\frac{ 1-p  }{ p} \bigg).
\nonumber 
\end{eqnarray}
It is not hard to see that the summations can be straightforwardly evaluated, obtaining, 
\begin{eqnarray}
D_{\mathrm{KL}}\left( P_{_{\uparrow}} || P_{_{\downarrow}}  \right) 
&=& 
\frac{p}{1-p}  \log\bigg(\frac{ p  }{ 1-p} \bigg)
+ \log\bigg(\frac{ 1-p  }{ p} \bigg)
\nonumber
\\
&=& 
\frac{2p-1}{1-p}  \log\bigg(\frac{ p  }{ 1-p} \bigg).
\label{eq:DKL-up|down-1}
\end{eqnarray}

On the other  hand, it is known that the entropy production rate for a random walk on a $\mathbb{Z}$ can be written as~\cite{Cocconi2020Entropy},
\begin{eqnarray}
e_{\mathrm{p}} = (2p-1) \log\bigg(\frac{ p  }{ 1-p} \bigg).
\end{eqnarray}
This expression and Eq.~(\ref{eq:DKL-up|down-1}) allows us to see that,
\begin{eqnarray}
D_{\mathrm{KL}}\left( P_{_{\uparrow}} || P_{_{\downarrow}}  \right) 
&=& \frac{2p-1}{1-p}  \log\bigg(\frac{ p  }{ 1-p} \bigg) 
 = \frac{e_{\mathrm{p}} }{1-p},
\label{eq:DKL-up|down}
\end{eqnarray}
which proves the proposition.

\end{proof}

It is worth noting that $D_{\mathrm{KL}}\left(  P_{_{\downarrow}} ||   P_{_{\uparrow}} \right)$ for the random walk can be calculated using Proposition~\ref{prop:DKLup-down}.
For this stochastic process we can easily obtain $P_{_{\downarrow}}(n)$ by substituting $1-p$ for $p$ in the expression for $P_{_{\uparrow}}(n)$, and vice versa. This property allows us to write down the following identity directly from Eq.~(\ref{eq:DKL-up|down}):
\begin{eqnarray}
D_{\mathrm{KL}}\left(  P_{_{\downarrow}} ||   P_{_{\uparrow}}  \right) 
&=& \frac{1-2p}{p}  \log\bigg(\frac{ 1-p  }{ p} \bigg) = \frac{e_{\mathrm{p}}}{p},
\label{eq:DKL-down|up}
\end{eqnarray}
which relates the Kullback-Leibler divergence $D_{\mathrm{KL}}\left(  P_{_{\downarrow}} ||   P_{_{\uparrow}} \right) $ to the entropy production rate $e_p$.

%%%%%%%%%%%%%%%%%%%%%%%%%%%%%%%%%%%%%%%%%%%%%%%%%%%% 
%%%%%%%%%%%%%%%%%%%%%%%%%%%%%%%%%%%%%%%%%%%%%%%%%%%% 
\section{Estimation of trend irreversibility index}
\label{sec:estimation_procedure}
%%%%%%%%%%%%%%%%%%%%%%%%%%%%%%%%%%%%%%%%%%%%%%%%%%%% 
%%%%%%%%%%%%%%%%%%%%%%%%%%%%%%%%%%%%%%%%%%%%%%%%%%%% 

%%%%%%%%%%%%%%%%%%%%%%%%%%%%%%%%%%%%%%%%%%%%%%%%%%%% 
\subsection{Estimation procedure}
\label{ssec:estimation}
%%%%%%%%%%%%%%%%%%%%%%%%%%%%%%%%%%%%%%%%%%%%%%%%%%%% 

In this section, we will establish a method for estimating both the uptrend and downtrend distributions, which will be used to estimate $I_{\mathrm{T}}$, giving us the degree of time irreversibility of the process.

Firstly, we should note that the time series represents a stationary process. Next, we can see that a time series can be divided into uptrends, downtrends, and constant trends, as shown in Fig.~\ref{fig:fig01}. Therefore, the duration of each uptrend and downtrend can be considered as a realization of $T_{_{{\uparrow}}}(0)$ and $T{_{_{\downarrow}}}(0)$, respectively. Thus, by collecting the durations of these trend patterns, we can estimate $P_{_{\uparrow}}$ and $P_{_{\downarrow}}$.

%
%%==================== FIGURE =========================
%%
\begin{figure}[ht]
\begin{center}
\scalebox{0.5}{\includegraphics{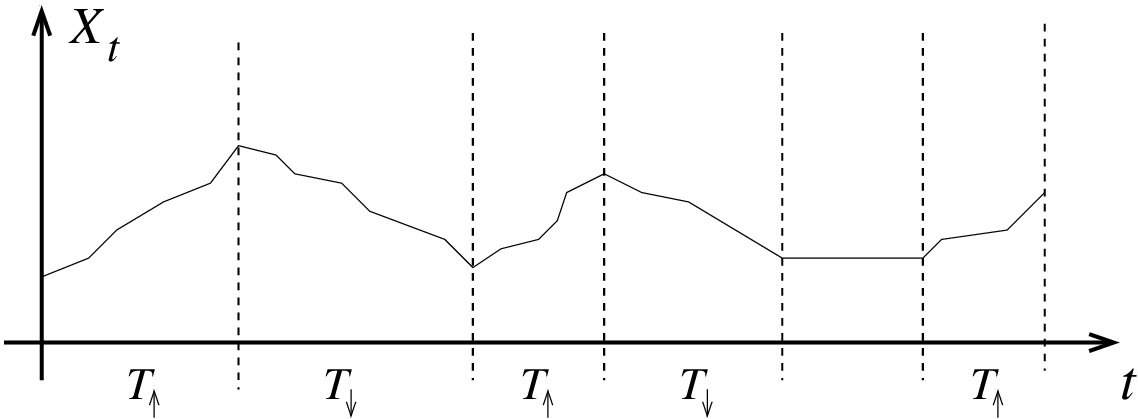}}
\end{center}
     \caption{
			This is a schematic representation of trend patterns in a time series. A time series $\{x_t : t\in \mathbb{N}0\}$ can be divided into uptrends, downtrends, and periods of constancy. The duration of uptrends is labeled as $T{{{\uparrow}}}$, while the duration of downtrends is labeled as $T{_{{\downarrow}}}$. This results in sample sets for periods of uptrends, downtrends, and periods of constancy.
			}
\label{fig:fig01}
\end{figure}
%%==================== FIGURE =========================
%

To exemplify the estimation procedure, we consider a simulation of the random walk. Figure~\ref{fig:fig02} shows the numerically estimated trend distributions $\hat{P}_{_{{\uparrow}}}$ and $\hat{P}_{_{{\downarrow}}}$ for different values of $p$. We used $p = 0.6$ and $N=10^5$ time-steps to obtain these estimations. In Figure~\ref{fig:fig02}a, we see the unbiased case, where $p=0.5$. $\hat{P}_{_{{\uparrow}}}$ and $\hat{P}_{_{{\downarrow}}}$ are close to each other, as expected from the time-reversible trajectories. In Figures~\ref{fig:fig02}b, \ref{fig:fig02}c, and \ref{fig:fig02}d, we display the uptrend and downtrend distributions for $p=0.60$, $p=0.70$, and $p=0.90$. As $p$ increases, the trend distributions $\hat{P}_{_{{\uparrow}}}$ and $\hat{P}_{_{{\downarrow}}}$ move away from each other, indicating time irreversibility, as we proved in Proposition~\ref{prop:DKLup-down}.

%
%%==================== FIGURE =========================
%%
\begin{figure}[ht]
\begin{center}
\scalebox{0.5}{\includegraphics{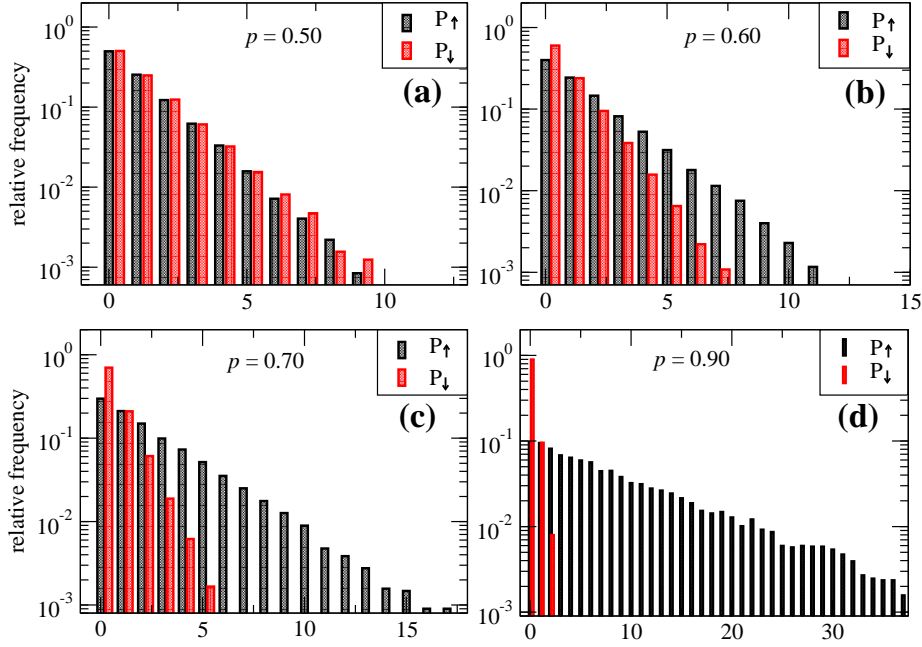}}
\end{center}
     \caption{
			Estimation uptrend and  downtrend distributions for a random walk on $\mathbb{Z}$.
	     }
\label{fig:fig02}
\end{figure}
%%==================== FIGURE =========================
%

%%%%%%%%%%%%%%%%%%%%%%%%%%%%%%%%%%%%%%%%%%%%%%%%%%%% 
\subsection{Testing the trend irreversibility index}
\label{ssec:estimation}
%%%%%%%%%%%%%%%%%%%%%%%%%%%%%%%%%%%%%%%%%%%%%%%%%%%% 

In this section we test the trend irreversibility index by using time-series obtained by numerical simulations of two processes: (i) a (second order) linear autoregressive (AR) process  which is known to be reversible and (ii) a non-linear autoregressive process which is known to be time irreversible~\cite{martinez2018Detection}. 

The second-order in time linear autoregressive process (AR) we consider here is the one studied in Ref.~\cite{martinez2018Detection}, and is given by the following recurrence equation,
\begin{equation}
\label{eq:AR2}
x_{t+2} = 0.7x_{t+1} + 0.2x_t + \xi_t,
\end{equation}
where $\xi_t$ is a gaussian white  noises, i.e., $\{\xi_t : t\in \mathbb{N}_0\}$ is a set of i.i.d. normally distributed random variables with zero mean and variance one.

On the other hand, the second order in time non-linear autoregressive (NAR) process we consider is defined as follows:
\begin{eqnarray}
\nonumber
x_{t+2} &=& 0.5x_{t+1}- 0.3x_{t} + 0.1y_{t} + 0.1x^2_{t} 
\\
\label{eq:NAR2-1}
&+& 0.4y^2_{t+1} + 0.0025\eta_t,
\\
\label{eq:NAR2-2}
y_{t+2} &=&\sin(4\pi t)+\sin(6\pi t)+0.0025\zeta_t.
\end{eqnarray}
The Laplacian noises in the equation above refer to two sets of independent and identically distributed random variables, namely $\{\eta_t : t\in \mathbb{N}_0\}$ and $\{\zeta_t : t\in \mathbb{N}_0\}$, with a Laplace distribution $\mbox{Lap}(\mu,\beta)$ as explained in~\cite{martinez2018Detection}. We set the parameters of the distribution to $\mu = 0$ and $\beta = 1$. The Laplace distribution $\mbox{Lap}(\mu,\beta)$ is defined by its probability density function,
\begin{equation}
f_\mathrm{Lap}(x) = \frac{1}{2\beta}e^{|x-\mu|/\beta}.
\end{equation}

To test the irreversibility index based on trend patterns, we simulate both the AR and NAR processes numerically. In the case of the NAR process, which is bivariate, we obtain a single time series by defining $u_t = x_t^2 + y_t^2$. We use the synthetic time series obtained from these simulations to estimate the uptrend and downtrend distributions. These distributions then allow us to estimate the irreversibility index $I_{\mathrm{T}}$ according to Equation (\ref{eq:indexIT}).

%==================== FIGURE =========================
%%
\begin{figure}[t]
\begin{center}
\scalebox{0.4}{\includegraphics{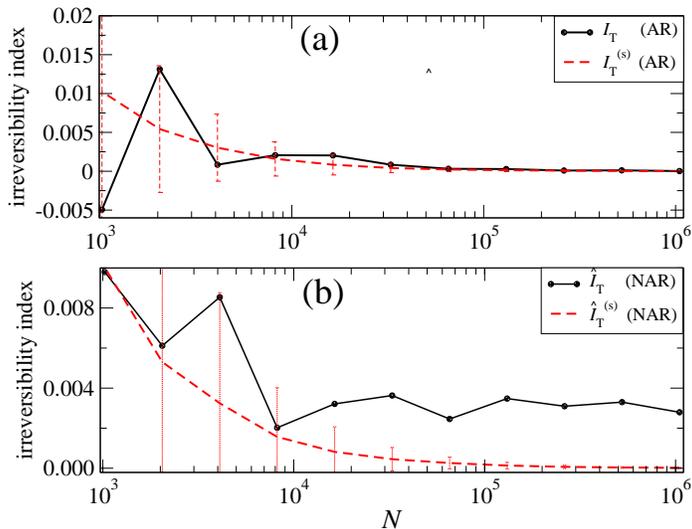}}
\end{center}
     \caption{
      The trend irreversibility index $I_{\mathrm{T}}$ is shown for autoregressive processes varying the sample size. Panel (a) displays the results for the reversible AR process, with the irreversibility index $I_{\mathrm{T}}$ (solid black line) being statistically negligible, as expected, when compared to the threshold of significance (dashed red line with error bars). Panel (b) shows the results for the irreversible NAR process, with the index $I_{\mathrm{T}}$ able to detect the irreversibility for sample sizes above $10^4$.
            }
\label{fig:fig03}
\end{figure}
%==================== FIGURE =========================

To evaluate the performance of the irreversibility index as a function of the time-series length $N$, we estimated $\hat I_{\mathrm{T}}$ for several values of $N$. In addition, we used an ensemble of $N_{s}=100$ surrogate time series to evaluate the significance of $\hat I_{\mathrm{T}}$ by calculating the irreversibility index $I_{\mathrm{T}}$ again; this estimation is referred to as $\hat I_{\mathrm{T}}^{(s)}$. The results are shown in Fig.~\ref{fig:fig03}.
Fig.~\ref{fig:fig03}a shows the behavior of $\hat I_{\mathrm{T}}$ as a function of the time series length $N$ (black solid line) for the AR process described above. The red solid line depicts the dependence of $I_{\mathrm{T}}^{(s)}$ with the time series length. The error bars denote confidence intervals with a confidence level of $1-\alpha$ for $\alpha = 0.05$. The irreversibility index $I_{\mathrm{T}}$ is able to determine the reversible nature of the process for all values of the time-series length analyzed.
Fig.~\ref{fig:fig03}b shows the behavior of $\hat I_{\mathrm{T}}$ as a function of the time series length $N$ (black solid line) for the NAR process described above. The red solid line depicts the dependence of $I_{\mathrm{T}}^{(s)}$ with the time series length, and the error bars correspond to confidence intervals with a confidence level of $1-\alpha$ for $\alpha = 0.05$. For time-series length larger than $N^*\approx 10^4$ time steps, the irreversibility index $I_{\mathrm{T}}$ can correctly determine the irreversibility of the process. For values of $N$ below $N^*$, the difference between $\hat I_{\mathrm{T}}$ and $I_{\mathrm{T}}^{(s)}$ is not significant. However, for time-series lengths larger than $N^*$, a significant difference is observed, which allows us to conclude that the process is irreversible, as expected.
This analysis indicates that our method for determining the irreversibility from a time-series is at least as efficient as other methods available in the current literature, with a significantly low computational cost.

%###########################################
%###########################################
\section{Methodology}
\label{sec:methodology}
%###########################################
%###########################################

In this analysis, we have examined the primary cryptocurrencies, including Bitcoin (BTC), Ethereum (ETH), and Ripple (XRP), which have become increasingly popular in recent years. Bitcoin is renowned for being the first-ever cryptocurrency with a maximum supply of 21 million coins, which makes it a deflationary currency. It is currently the largest cryptocurrency by market capitalization, which exceeded 500 billion USD in April 2023~\cite{CoinMark}. On the other hand, Ethereum does not have a fixed number of coins and currently holds a market cap of over 200 billion USD, placing it in second place. XRP holds the sixth place in market capitalization with over 26 billion dollars~\cite{CoinMarket}. 

The analysis also included Litecoin (LTC) and Bitcoin Cash (BCH), two currencies that, while not as prominent as others, are still significant due to their market capitalization. Litecoin, which was created by Charlie Lee in 2011, is often considered the ``silver to Bitcoin's gold''~\cite{Mustafa}. It is designed to be faster and less expensive than Bitcoin. Bitcoin Cash, on the other hand, was created as a fork of the original Bitcoin blockchain in 2017. Some proponents argue that it is an improved version of the original cryptocurrency and that its goal is to stay true to the vision of the creator, Satoshi Nakamoto~\cite{bch}

%%%%%%%%%%%%%%%%%%%%%%%%%%%%%%%%%%%%%%%%%%%%%
%\subsection{Database and log returns}
%\label{ssec:database}

We acquired data from the CryptoDataDownload website \citep{CrypDownload}, which offers historical databases from various exchanges such as Bitstamp, Bitso, Gemini, Binance, and others. For each exchange, we had nine columns that included Unix time, date, symbol, open, high, low, close, volume of cryptocurrency, and volume.
Using the aforementioned cryptocurrency data, we calculated the logarithmic returns of the opening prices between times $t$ and $t-1$, defined as follows.

\begin{definition}
The log return $\ell_t$ are defined as the natural logarithm of the ratio between the price at time $t$ and the price at time $t-1$, i.e., 
\begin{equation}
\ell_t = \log \left( \frac{P_{t}}{P_{t-1}} \right),
\end{equation}
where $P_t$ is the price at time $t$, and $P_{t-1}$ is the price at time $t-1$.
\end{definition}

%###########################################
\subsection{Data Preprocessing}
\label{ssec:data-preprocessing}
%###########################################

We were able to access several years of data for each cryptocurrency at a minute frequency, although the available data varied. For Bitcoin, our data spanned from 2015 to 2023, but we only analyzed up to 2022 to ensure we included full years. However, we did encounter some missing values in the retrieved data. We summarized the fraction of missing values we found for each data set used in our analysis in Table~\ref{tab:criptable}.

To address this issue, we replaced the missing data in the log returns with independent random numbers following a normal distribution. We used the sample mean and sample standard deviation of the corresponding log returns as the mean and standard deviation of this normal distribution, respectively. This substitution was necessary to avoid a cumulative time delay in the log return time series. It is important to note that the substitution of missing data by independent and identically distributed random numbers does not contribute to the irreversibility nor inefficiency of the corresponding cryptocurrency data (see Section~\ref{ssec:efficiency} below).

\begin{table}[h]
\begin{center}
\begin{tabular}{| c | c | c |}
\hline
Cryptocurrency & Analysis Period & Amount of missing data \\ \hline
BTC & 2015-2022 & $ 0.9802\, \% $ \\
ETH & 2017-2022 & $ 3.4315\, \% $\\
XRP & 2017-2022 & $ 0.0582\, \% $ \\
BCH & 2018-2022 & $ 0.0002\, \% $ \\
LTC & 2019-2022 & $ 3.5554\, \% $ \\ \hline
\end{tabular}
\caption{Cryptocurrency information. In the table, we display the amount of missing data and the period of analysis for each cryptocurrency used in this study.
}
\label{tab:criptable}
\end{center}
\end{table}
%%%%%%%%%%%%%% ai

%###########################################
\subsection{Analysis of Trends Patterns}
\label{ssec:analisis-trend}
%###########################################

In order to track how irreversibility changes over time in cryptocurrency datasets, we utilize the sliding windows technique. We have set the window size to 91 days (or $131\, 040$ minutes), and estimate irreversibility and inefficiency based on the data within that window. After we obtained these estimations, we move the window ten days forward ($14\, 440$ minutes) and repeat the estimations. We chose this window size to ensure that the sample size is large enough to minimize statistical errors, as our tests in Section~\ref{ssec:estimation} showed. It is important to note that, due to the continuous nature of the datasets, some analyzed quarters may include data from two consecutive years.

Next, we calculated the durations $T_{_\uparrow}$ and $T_{_\downarrow}$ for uptrends and downtrends, respectively, for the log returns in each sliding window. We then compared the distributions $P_{_\uparrow}$ and $P_{_\downarrow}$ of uptrends and downtrends for both datasets by computing the Kullback-Leibler divergence $I_\mathrm{T} = D_{\mathrm{KL}}(P_{_\uparrow} ||P_{_\downarrow} )$. This approach allowed us to quantify the difference between these distributions.

As previously mentioned, if the process is reversible, the irreversibility index $I_{\mathrm{T}}$ is expected to be statistically insignificant. Otherwise, the system can be considered to exhibit some degree of irreversibility. To determine the statistical significance, we performed a surrogate data analysis by randomly shuffling the original data (log returns) to eliminate any sign of irreversibility in the sample. An ensemble of surrogate data was used as a null hypothesis (zero irreversibility) to examine the significance of the irreversibility obtained through $I_\mathrm{T}$.

%###########################################
%###########################################
\subsection{Efficiency analysis}
\label{ssec:efficiency}
%###########################################
%###########################################

In this section, we will be using the methodology proposed by Brouty and Garcin~\cite{Brouty2022Statistical} to assess inefficiency (in the weak sense) from the cryptocurrency prices time series. To begin, we consider a set of $n+1$ consecutive log returns, denoted by $\{\ell_0,\ell_1,\dots,\ell_n\}$. This sample set will then be transformed into a new time series $\{ x_i\, :\, 1\leq i \leq n \}$, with a binary state space as outlined below,
\begin{equation}
a_i =  \left\{ \begin{array} 
            {l@{ \quad \mbox{ if } \quad }l} 
1  & \ell_i >0,\\
0 &  \ell_i \leq 0,
             \end{array} \right.
\end{equation}
Assuming that $\{ a_i\, :\, 1\leq i \leq n \}$ represents a finite realization of a stochastic process denoted by $\{X_{t}\, : \, t\in \mathbb{N}_0\}$, the probability of obtaining a given trajectory $\mathbf{a} = (a_1,a_1,\dots,a_L)$ of length $L$ and binary state space ($a_i \in \{0,1\}$) can be expressed in the following way, 
\begin{equation}
p(\mathbf{a}) := \mathbb{P}\left((X_1,X_2,\dots,X_{L})=\mathbf{a} \right).
\end{equation}
Using this notation, we can define the joint entropy for the discrete random vector $(X_1,X_2,\dots,X_{L})$ as stated in~\cite{cover1999elements},
\begin{equation}
H^L :=  H(X_1,X_2,\dots,X_{L}) = - \sum_{\mathbf{a} }p(\mathbf{a})\log\left(p(\mathbf{a}) \right).
\label{eq:HL} 
\end{equation}

If a market is not working efficiently, statistical arbitrages can appear, and it may be possible to take advantage of forecasting the value of $X_{L+1}$ by analyzing the past history of the process $(X_1,X_2,\dots,X_L)$. Brouty and Garcin argue that if the corresponding market satisfies the EMH (in the weak sense), we could not take advantage of the market and therefore, we would be unable to predict $X_{L+1}$ using past information of the process. This implies that the random variable $X_{L+1}$ can be considered as a fair Bernoulli trial. Conversely, if the market allows the possibility of arbitrage, we would be able to forecast the future value of the process, i.e., of $X_{L+1}$, and then take advantage of such a prediction. Brouty and Garcin proposed in Ref.~\cite{Brouty2022Statistical} to distinguish between efficient and inefficient markets by evaluating the difference in joint entropies of $(X_1,\dots,X_L, X_{L+1})$ under these two scenarios. They introduced the inefficiency index, defined as,
\begin{equation}
I_*^{L+1}= H_*^{L+1}-H^{L+1},
\end{equation}
where $H_*^{L+1}$ represents the joint entropy of $(X_1,\dots,X_L, X_{L+1})$, assuming that $X_{L+1}$ is an independent fair Bernoulli trial. The variable $H^{L+1}$ is the joint entropy for $(X_1,\dots,X_L, X_{L+1})$, as defined in Equation~(\ref{eq:HL}). Brouty and Garcin showed that $H_*^{L+1}$ can in fact be expressed as follows,
\begin{equation}
H_*^{L+1} = 1 + H^L,
\end{equation}
this enables us to express the inefficiency index $I_*^L$ in a straightforward manner as,
\begin{equation}
I_*^{L+1}= \log(2) + H^L -H^{L+1}.
\end{equation}
As stated in Theorem 1 from Ref.~\cite{Brouty2022Statistical}, the market will only be efficient if $I_*^{L+1}=0$, and inefficient otherwise. It is worth noting that we determine the statistical significance of the inefficiency index $I_*^{L+1}$ by performing a surrogate data analysis, following the same method as the irreversibility index (refer to Section~\ref{ssec:analisis-trend}).

%###########################################
%###########################################
\section{Results}
\label{sec:results}
%###########################################
%###########################################

%
%==================== FIGURE =========================
%%
\begin{figure}[ht]
\begin{center}
\scalebox{0.5}{\includegraphics{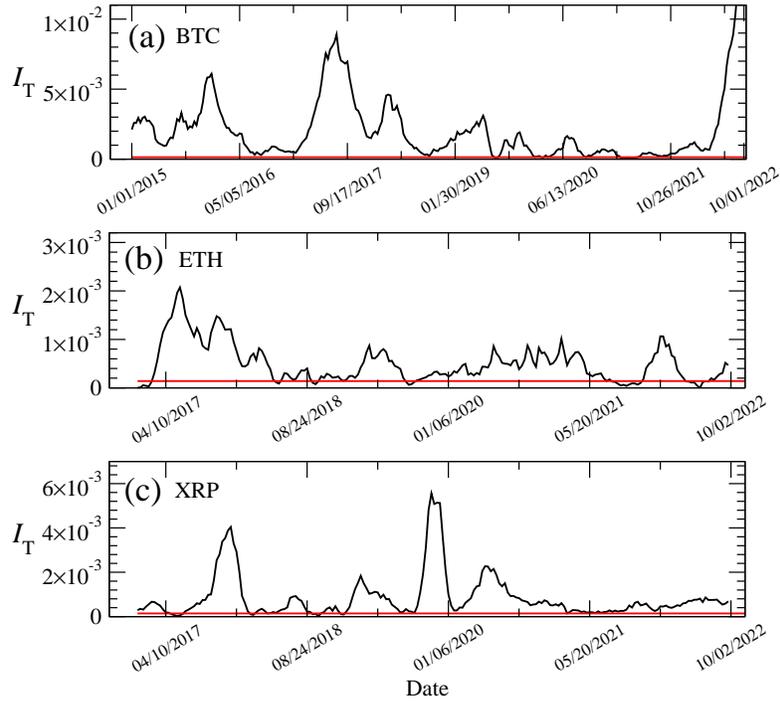}}
\end{center}
     \caption{
The trend irreversibility index ($I_{\mathrm{T}}$) was estimated from the log return time series of Bitcoin (BTC), Ethereum (ETH), and XRP, as shown in (a), (b), and (c) respectively. The irreversibility indices are represented by solid black lines, while the solid red lines indicate the threshold of statistical confidence at $95\%$. 
      }
\label{fig:IT:BTC-ETH-XRP}
\end{figure}
%==================== FIGURE =========================
%

%
%==================== FIGURE =========================
%%
\begin{figure}[ht]
\begin{center}
\scalebox{0.5}{\includegraphics{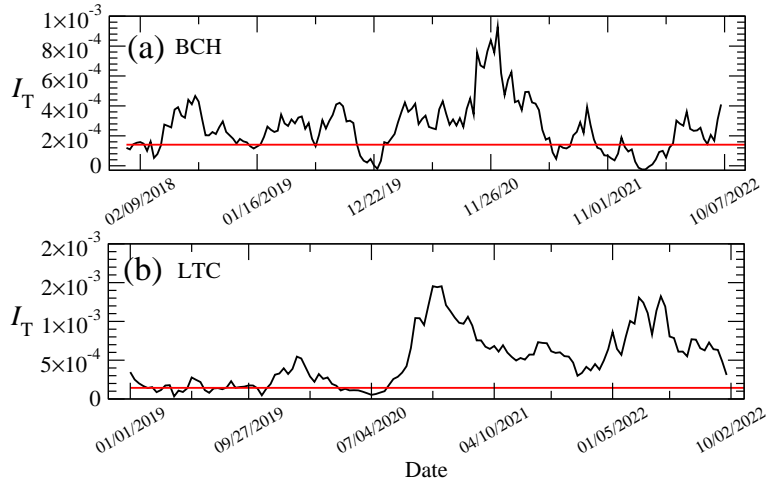}}
\end{center}
     \caption{
The trend irreversibility index $I_{\mathrm{T}}$ was estimated from the log return time series of both Bitcoin Cash (BCH) and Litecoin (LTC). The solid black lines represent the irreversibility indices, while the solid red lines indicate the threshold of statistical confidence at $95\%$.       }
\label{fig:IT:BCH-LTC}
\end{figure}
%==================== FIGURE =========================
%

We start by showing, in Fig.~\ref{fig:IT:BTC-ETH-XRP},  the estimated trend irreversibility index $I_{\mathrm{T}}$ we obtained from (a) Bitcoin (BTC), (b) Ethereum (ETH) and (c) XRP log return time series. 
From these irreversibility curves we  can observe that these cryptocurrency markets have strong signs of irreversibility, something which is clear when comparing the estimated curves (solid black lines) with the threshold of statistical significance (solide red lines), which as we mentioned above was obtained by means of surrogate data. Above such a threshold, we say that the irreversibility is positive with a confidence at $95\%$. We should also notice that this irreversibility exhibited in all these cryptocurrencies is not constant: we clearly see periods of high irreversibility followed from periods of low (or even statistically vanishing) irreversibility. Also, but less clear at first sight, we could even say that the irreversibility seems to decrease globally overtime although not in a monotonic nor a systematic way.

In Fig.~\ref{fig:IT:BCH-LTC}, we show the trend irreversibility index $I_{\mathrm{T}}$ estimated from the log returns of Bitcoin Cash (BCH) and Litecoin (LTC). We observe that the irreversibility curves for these two cryptocurrencies exhibit a similar pattern to the first three cryptocurrencies (BTC, ETH, and XRP), with periods of high and low irreversibility. However, in contrast to the other cryptocurrencies, we did not observe a systematic decrease in the irreversibility over time for BCH and LTC. Bitcoin Cash demonstrates a consistently low level of irreversibility, as indicated by the significance threshold. For Litecoin, we see an increasing trend in irreversibility over the analyzed period. It is important to note that Litecoin was launched in 2011 and the analyzed period is relatively short. Therefore, the observed increase in irreversibility may not be representative of the cryptocurrency overall behavior. Nevertheless, it is evident that all analyzed cryptocurrencies display signs of varying irreversibility, a characteristic that reflects the non-stationarity of cryptocurrency markets.
%
%
%==================== FIGURE =========================
%%
\begin{figure}[ht]
\begin{center}
\scalebox{0.5}{\includegraphics{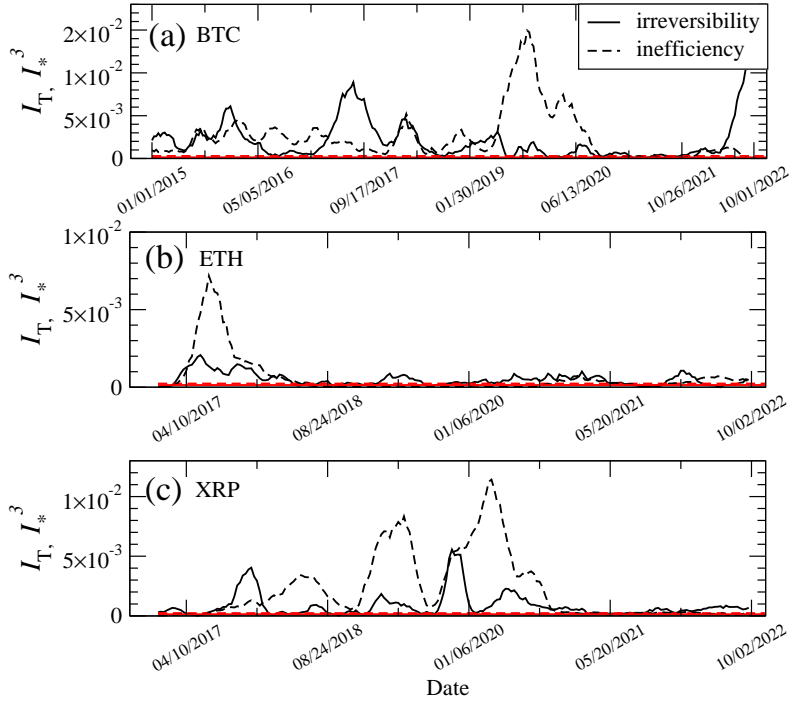}}
\end{center}
     \caption{
Trend irreversibility index $I_{\mathrm{T}}$ and inefficiency index $I_*^{3}$ estimated from (a) Bitcoin (BTC), (b) Ethereum (ETH) and (c)  XRP log return time series. Solid black lines represent the irreversibility and the dashed black lines represent the inefficiency.   Solid and dashed red lines stand for the threshold of statistical confidence at $95\%$ for irreversibility and inefficiency respectively.
      }
\label{fig:IT:BTC-ETH-XRP-compara}
\end{figure}
%==================== FIGURE =========================
%

%
%==================== FIGURE =========================
%%
\begin{figure}[ht]
\begin{center}
\scalebox{0.5}{\includegraphics{Fig07}}
\end{center}
     \caption{
Trend irreversibility index $I_{\mathrm{T}}$ and inefficiency index $I_*^{3}$  estimated from (a) Bitcoin Cash (BCH), (b) Litecoin (LTC) log return time series.  Solid black lines represent the irreversibility and the dashed black lines represent the inefficiency.  Solid and dashed red lines stand for the threshold of statistical confidence at $95\%$ for irreversibility and inefficiency respectively.
       }
\label{fig:IT:BCH-LTC-compara}
\end{figure}
%==================== FIGURE =========================
%

To analyze the relationship between inefficiency and irreversibility, we evaluated the inefficiency index $I^{3}_*$ for all the cryptocurrencies mentioned. Fig.~\ref{fig:IT:BTC-ETH-XRP-compara} shows a comparison of the irreversibility index (solid black lines) and the inefficiency index (dashed black lines) for (a) BTC, (b) ETH, and (c) XRP cryptocurrencies. The comparison reveals that inefficiency bears some similarity to irreversibility, with periods of high inefficiency alternating with periods of low or null inefficiency. These are what we can see as ``peaks'' of inefficiency, similar to the peaks of irreversibility we found in the time evolution of such indices. Fig.~\ref{fig:IT:BTC-ETH-XRP-compara} also shows that besides the characteristics shared by irreversibility and inefficiency, there are some periods in which both indices seem to coincide, but in others, they do not. For instance, irreversibility and inefficiency indices estimated from ETH log returns behave almost in the same way throughout the analyzed period. However, for BTC and XRP, we only observe small time intervals in which irreversibility and inefficiency follow each other. It is worth noting that at first sight, it is not completely clear that irreversibility and inefficiency are equivalent, at least within the extent of this statistical analysis.

Something similar occurs with the other two cryptocurrencies, BCH and LTC. In Fig.~\ref{fig:IT:BCH-LTC-compara}, we can see a comparison between the estimated irreversibility and inefficiency from BCH and LTC log returns.

In these cases, the difference between irreversibility and inefficiency seems to be more noticeable. If we look at Fig.~\ref{fig:IT:BCH-LTC-compara} (a), which shows the time evolution of irreversibility and inefficiency for BCH, we can see that the inefficiency (dashed black line) is significant at the beginning of the analyzed period, but quickly reaches efficiency. In contrast, the time evolution of irreversibility for BCH exhibits periods with positive irreversibility throughout the analyzed period. For Litecoin, the inefficiency is almost zero at the level of statistical significance we use, as shown in Fig.~\ref{fig:IT:BCH-LTC-compara}(b). However, for Litecoin, we also see that the irreversibility is positive and particularly large at the end of the analyzed period. To evaluate how much the inefficiency and irreversibility are related to each other, we computed the Pearson correlation coefficient between the irreversibility and inefficiency indices, as shown in Table~\ref{tab:cripcorr}. As expected, the correlation between irreversibility and inefficiency is small for BTC and LTC, while for the other cryptocurrencies, such correlation is relatively large. However, these correlation values are not large enough to suggest that inefficiency and irreversibility are completely equivalent, at least within the extent of our statistical analysis.
\begin{table}[h]
\begin{center}
\begin{tabular}{| c | c |}
\hline
Cryptocurrency & Correlation coefficient
%& Kendall & Spearman 
\\ 
\hline
BTC & $-0.1111$  
%& $ 0.0935$ & $0.1452$ 
\\
ETH & $0.6733$ 
%& $ 0.3247$ & $0.4695$ 
\\
XRP & $0.3253$ 
%& $ 0.2878$ & $0.4324$  
\\
BCH & $0.4611$ 
%& $ 0.4234$ & $0.5938$  
\\
LTC & $-0.2087$ 
%& $-0.2665$ & $-0.3853$  
\\ \hline
\end{tabular}
\caption{Pearson correlation coefficient between $I_{\mathrm{T}}$ and $I^3_{*}$ for all the analyzed cryptocurrencies.}
\label{tab:cripcorr}
\end{center}
\end{table}
%
%

%###########################################
%###########################################
\section{Discussion  and conclusions}
\label{sec:conclusions}
%###########################################
%###########################################

In this study, we analyzed the evolution of time-irreversibility in various cryptocurrency markets by examining their log returns. To do this, we introduced a new metric called the \textit{trend irreversibility index}, which measures the degree of time-irreversibility based on the analysis of uptrend and downtrend distributions. Uptrends and downtrends in a time series are defined as subsequences that monotonically increase or decrease, respectively. The distributions of the lengths of these trend patterns are used to define the uptrend and downtrend distributions.
We have shown the relationship between the irreversibility index (which is the Kullback-Leibler divergence of the uptrend distribution with respect to the downtrend distribution) and the entropy production rate in a simple Markov chain, namely the random walk on $\mathbb{Z}$. After testing our estimator in classical models of reversible and irreversible autoregressive processes, we estimated the irreversibility of log returns from five cryptocurrency markets: Bitcoin, Ethereum, Ripple, Bitcoin Cash, and Litecoin. Our findings suggest  that all of these cryptocurrency markets exhibit strong signs of irreversibility. Among the cryptocurrencies with the highest market capitalization, BTC, ETH, and XRP appear to follow a global trend towards reversibility over time, whereas LTC displays the opposite behavior, showing an increase in irreversibility over time. Bitcoin Cash, on the other hand, maintains a low degree of irreversibility that remains almost constant over time. It is important to note that the tendencies observed in the irreversibility curves are not absolute; all of the cryptocurrency markets analyzed exhibited periods of irreversibility followed by periods of reversibility. Additionally, at least one cryptocurrency showed a series of increasing irreversibility peaks on the $I_{\mathrm{T}}$ curve as a function of time, and those cryptocurrencies that showed a time evolution towards reversibility did not exhibit strictly monotonic behavior.

We proceeded to assess inefficiency using an index based on Shannon entropy introduced in Ref.~\cite{Brouty2022Statistical} to analyze the relationship between irreversibility and inefficiency. We estimated the index for all the mentioned cryptocurrencies using the sliding window technique to analyze how inefficiency changes over time. Our findings show that inefficiency in the analyzed cryptocurrency markets follows a similar behavior to irreversibility. Some cryptocurrencies seem to evolve towards efficiency in a non-monotonic, and even random, manner, while others do not have a clear tendency. In general, all the inefficiency curves exhibit periods of efficiency followed by periods of inefficiency, which might indicate non-stationarity, similar to what we observed in the time evolution of irreversibility. Although we found that irreversibility and inefficiency curves share some characteristics, we did not find a clear equivalence between the two indices. Some of the cryptocurrencies studied here exhibited periods in which inefficiency and irreversibility seemed to follow each other, although this was not sustained for the entire analyzed time span. Other cryptocurrencies did not show any clear coincidence. To do this comparison systematically, we evaluated the Pearson correlation coefficient, which allowed us to see how much inefficiency and irreversibility are related. The correlation between inefficiency and irreversibility is  not large enough to suggest an equivalence, at least within the statistical extent of this analysis. In any case, our results show that a much deeper study on irreversibility is necessary to understand to what extent irreversibility might provide information on the analyzed market or what are the origins of the observed irreversibility are, for example, by analyzing the price formation through models of limit order books. A more careful analysis of inefficiency, such as in the semi-strong sense or analyzing other inefficiency indices, might be crucial to gain a better understanding of the role of irreversibility in inefficiency for cryptocurrency and other financial markets.

\section*{Acknowledgments}
The authors thanks CONACyT-Mexico by financial support through grant number CF-2019-1327701. JMH was supported by CONACyT via the Doctoral Fellowship number 858831.

%% The Appendices part is started with the command \appendix;
%% appendix sections are then done as normal sections
%\appendix

%\section{ Appendix}
%\label{sec:sample:appendix}

%\begin{eqnarray}
%\end{eqnarray}

%% If you have bibdatabase file and want bibtex to generate the
%% bibitems, please use
%%

\bibliographystyle{elsarticle-num} 
\bibliography{TrendPattBib}

%% else use the following coding to input the bibitems directly in the
%% TeX file.

% \begin{thebibliography}{00}

% %% \bibitem{label}
% %% Text of bibliographic item

% \bibitem{}

% \end{thebibliography}
\end{document}